\documentclass[11pt]{article}
\usepackage[a4paper]{geometry}
\usepackage{amsfonts, amsmath, amssymb, amsthm, graphicx, caption, authblk, multirow, makecell, framed, float, xcolor, enumitem, tikz, hyperref}
\setlength{\parskip}{0in}
\setlength{\parindent}{0.3in}
\setlength{\topmargin}{-0.2in}
\setlength{\textheight}{8.7in}
\setlength{\oddsidemargin}{0.2in}
\setlength{\evensidemargin}{0.2in}
\setlength{\textwidth}{6in}

\theoremstyle{definition}

\newtheorem{lemma}{Lemma}

\theoremstyle{remark}

\definecolor{blk}{RGB}{63,63,63}
\newcommand*{\mybox}[1]{%
  \framebox{\raisebox{0cm}[0.5\baselineskip][0.05\baselineskip]{%
    \hbox to 0.10cm {\hss#1\hss}}}\hspace{0.05cm}}
\newcommand{\mystack}[1]{\begin{array}{|r|||}\hline#1\\\hline\end{array}}

\begin{document}
\title{Physical Zero-Knowledge Proof for Ball Sort Puzzle}
\author[1]{Suthee Ruangwises\thanks{\texttt{ruangwises@gmail.com}}}
\affil[1]{Department of Informatics, The University of Electro-Communications, Tokyo, Japan}
\date{}
\maketitle

\begin{abstract}
Ball sort puzzle is a popular logic puzzle consisting of several bins containing balls of multiple colors. Each bin works like a stack; a ball has to follow the last-in first-out order. The player has to sort the balls by color such that each bin contains only balls of a single color. In this paper, we propose a physical zero-knowledge proof protocol for the ball sort puzzle using a deck of playing cards, which enables a prover to physically show that he/she knows a solution with $t$ moves of the ball sort puzzle without revealing it. Our protocol is the first zero-knowledge proof protocol for an interactive puzzle involving moving objects.

\textbf{Keywords:} zero-knowledge proof, card-based cryptography, ball sort puzzle, puzzle
\end{abstract}

\section{Introduction}
\textit{Ball sort puzzle} is a logic puzzle which has recently become popular via smartphone apps \cite{google}. The player is given $n$ bins, each fully filled with $h$ balls. All the $hn$ balls are classified into $n$ colors, each with $h$ balls. The player is also given $m$ additional empty bins, each also with capacity $h$. The objective of this puzzle is to ``sort'' the balls by color, i.e. make each bin either empty or full with balls of a single color. Each bin works like a stack; the player can only pick the topmost ball from a bin and place it to the top of another bin that is not full. The additional restriction for each move is that, if the destination bin is not empty, the color of its top ball must be the same as the moved ball. See Fig. \ref{fig0} for an example.

\begin{figure}
\centering
\begin{tikzpicture}
\draw[line width=0.6mm] (0,0) -- (0,1.9);
\draw[line width=0.6mm] (0,0) -- (0.7,0);
\draw[line width=0.6mm] (0.7,0) -- (0.7,1.9);
\draw[line width=0.6mm] (1,0) -- (1,1.9);
\draw[line width=0.6mm] (1,0) -- (1.7,0);
\draw[line width=0.6mm] (1.7,0) -- (1.7,1.9);
\draw[line width=0.6mm] (2,0) -- (2,1.9);
\draw[line width=0.6mm] (2,0) -- (2.7,0);
\draw[line width=0.6mm] (2.7,0) -- (2.7,1.9);
\draw[line width=0.6mm] (3,0) -- (3,1.9);
\draw[line width=0.6mm] (3,0) -- (3.7,0);
\draw[line width=0.6mm] (3.7,0) -- (3.7,1.9);

\node[draw,circle] at (0.35,0.35) {3};
\node[draw,circle] at (0.35,0.95) {1};
\node[draw,circle] at (0.35,1.55) {2};
\node[draw,circle] at (1.35,0.35) {2};
\node[draw,circle] at (1.35,0.95) {2};
\node[draw,circle] at (1.35,1.55) {3};
\node[draw,circle] at (2.35,0.35) {1};
\node[draw,circle] at (2.35,0.95) {1};
\node[draw,circle] at (2.35,1.55) {3};
\node at (4.3,0.95) {\LARGE{$\rightarrow$}};
\end{tikzpicture}
\begin{tikzpicture}
\draw[line width=0.6mm] (0,0) -- (0,1.9);
\draw[line width=0.6mm] (0,0) -- (0.7,0);
\draw[line width=0.6mm] (0.7,0) -- (0.7,1.9);
\draw[line width=0.6mm] (1,0) -- (1,1.9);
\draw[line width=0.6mm] (1,0) -- (1.7,0);
\draw[line width=0.6mm] (1.7,0) -- (1.7,1.9);
\draw[line width=0.6mm] (2,0) -- (2,1.9);
\draw[line width=0.6mm] (2,0) -- (2.7,0);
\draw[line width=0.6mm] (2.7,0) -- (2.7,1.9);
\draw[line width=0.6mm] (3,0) -- (3,1.9);
\draw[line width=0.6mm] (3,0) -- (3.7,0);
\draw[line width=0.6mm] (3.7,0) -- (3.7,1.9);

\node[draw,circle] at (0.35,0.35) {3};
\node[draw,circle] at (0.35,0.95) {1};
\node[draw,circle] at (0.35,1.55) {2};
\node[draw,circle] at (1.35,0.35) {2};
\node[draw,circle] at (1.35,0.95) {2};
\node[draw,circle] at (1.35,1.55) {3};
\node[draw,circle] at (2.35,0.35) {1};
\node[draw,circle] at (2.35,0.95) {1};
\node[draw,circle] at (3.35,0.35) {3};
\end{tikzpicture}\\
\begin{tikzpicture}
\node at (0,0) {\LARGE{$\swarrow$}};
\end{tikzpicture}\\
\begin{tikzpicture}
\draw[line width=0.6mm] (0,0) -- (0,1.9);
\draw[line width=0.6mm] (0,0) -- (0.7,0);
\draw[line width=0.6mm] (0.7,0) -- (0.7,1.9);
\draw[line width=0.6mm] (1,0) -- (1,1.9);
\draw[line width=0.6mm] (1,0) -- (1.7,0);
\draw[line width=0.6mm] (1.7,0) -- (1.7,1.9);
\draw[line width=0.6mm] (2,0) -- (2,1.9);
\draw[line width=0.6mm] (2,0) -- (2.7,0);
\draw[line width=0.6mm] (2.7,0) -- (2.7,1.9);
\draw[line width=0.6mm] (3,0) -- (3,1.9);
\draw[line width=0.6mm] (3,0) -- (3.7,0);
\draw[line width=0.6mm] (3.7,0) -- (3.7,1.9);

\node[draw,circle] at (0.35,0.35) {3};
\node[draw,circle] at (0.35,0.95) {1};
\node[draw,circle] at (0.35,1.55) {2};
\node[draw,circle] at (1.35,0.35) {2};
\node[draw,circle] at (1.35,0.95) {2};
\node[draw,circle] at (2.35,0.35) {1};
\node[draw,circle] at (2.35,0.95) {1};
\node[draw,circle] at (3.35,0.35) {3};
\node[draw,circle] at (3.35,0.95) {3};
\node at (4.3,0.95) {\LARGE{$\rightarrow$}};
\end{tikzpicture}
\begin{tikzpicture}
\draw[line width=0.6mm] (0,0) -- (0,1.9);
\draw[line width=0.6mm] (0,0) -- (0.7,0);
\draw[line width=0.6mm] (0.7,0) -- (0.7,1.9);
\draw[line width=0.6mm] (1,0) -- (1,1.9);
\draw[line width=0.6mm] (1,0) -- (1.7,0);
\draw[line width=0.6mm] (1.7,0) -- (1.7,1.9);
\draw[line width=0.6mm] (2,0) -- (2,1.9);
\draw[line width=0.6mm] (2,0) -- (2.7,0);
\draw[line width=0.6mm] (2.7,0) -- (2.7,1.9);
\draw[line width=0.6mm] (3,0) -- (3,1.9);
\draw[line width=0.6mm] (3,0) -- (3.7,0);
\draw[line width=0.6mm] (3.7,0) -- (3.7,1.9);

\node[draw,circle] at (0.35,0.35) {3};
\node[draw,circle] at (0.35,0.95) {1};
\node[draw,circle] at (1.35,0.35) {2};
\node[draw,circle] at (1.35,0.95) {2};
\node[draw,circle] at (1.35,1.55) {2};
\node[draw,circle] at (2.35,0.35) {1};
\node[draw,circle] at (2.35,0.95) {1};
\node[draw,circle] at (3.35,0.35) {3};
\node[draw,circle] at (3.35,0.95) {3};
\end{tikzpicture}\\
\begin{tikzpicture}
\node at (0,0) {\LARGE{$\swarrow$}};
\end{tikzpicture}\\
\begin{tikzpicture}
\draw[line width=0.6mm] (0,0) -- (0,1.9);
\draw[line width=0.6mm] (0,0) -- (0.7,0);
\draw[line width=0.6mm] (0.7,0) -- (0.7,1.9);
\draw[line width=0.6mm] (1,0) -- (1,1.9);
\draw[line width=0.6mm] (1,0) -- (1.7,0);
\draw[line width=0.6mm] (1.7,0) -- (1.7,1.9);
\draw[line width=0.6mm] (2,0) -- (2,1.9);
\draw[line width=0.6mm] (2,0) -- (2.7,0);
\draw[line width=0.6mm] (2.7,0) -- (2.7,1.9);
\draw[line width=0.6mm] (3,0) -- (3,1.9);
\draw[line width=0.6mm] (3,0) -- (3.7,0);
\draw[line width=0.6mm] (3.7,0) -- (3.7,1.9);

\node[draw,circle] at (0.35,0.35) {3};
\node[draw,circle] at (1.35,0.35) {2};
\node[draw,circle] at (1.35,0.95) {2};
\node[draw,circle] at (1.35,1.55) {2};
\node[draw,circle] at (2.35,0.35) {1};
\node[draw,circle] at (2.35,0.95) {1};
\node[draw,circle] at (2.35,1.55) {1};
\node[draw,circle] at (3.35,0.35) {3};
\node[draw,circle] at (3.35,0.95) {3};
\node at (4.3,0.95) {\LARGE{$\rightarrow$}};
\end{tikzpicture}
\begin{tikzpicture}
\draw[line width=0.6mm] (0,0) -- (0,1.9);
\draw[line width=0.6mm] (0,0) -- (0.7,0);
\draw[line width=0.6mm] (0.7,0) -- (0.7,1.9);
\draw[line width=0.6mm] (1,0) -- (1,1.9);
\draw[line width=0.6mm] (1,0) -- (1.7,0);
\draw[line width=0.6mm] (1.7,0) -- (1.7,1.9);
\draw[line width=0.6mm] (2,0) -- (2,1.9);
\draw[line width=0.6mm] (2,0) -- (2.7,0);
\draw[line width=0.6mm] (2.7,0) -- (2.7,1.9);
\draw[line width=0.6mm] (3,0) -- (3,1.9);
\draw[line width=0.6mm] (3,0) -- (3.7,0);
\draw[line width=0.6mm] (3.7,0) -- (3.7,1.9);

\node[draw,circle] at (1.35,0.35) {2};
\node[draw,circle] at (1.35,0.95) {2};
\node[draw,circle] at (1.35,1.55) {2};
\node[draw,circle] at (2.35,0.35) {1};
\node[draw,circle] at (2.35,0.95) {1};
\node[draw,circle] at (2.35,1.55) {1};
\node[draw,circle] at (3.35,0.35) {3};
\node[draw,circle] at (3.35,0.95) {3};
\node[draw,circle] at (3.35,1.55) {3};
\end{tikzpicture}
\caption{An example of a ball sort puzzle with $h=3$, $n=3$, and $m=1$ with its solution in five moves}
\label{fig0}
\end{figure}

Very recently, Ito et al. \cite{np} proved that it is NP-complete to determine whether a given instance of the ball sort puzzle has a solution with at most $t$ moves for a given integer $t$, or even to determine whether it has a solution at all. They also showed that an instance of the ball sort puzzle is solvable if and only if its corresponding instance of a \textit{water sort puzzle} (a similar puzzle with more restrictive rules) is solvable.

Suppose that Philip, a puzzle expert, constructed a difficult ball sort puzzle and challenged his friend Vera to solve it. After trying for a while, Vera could not solve his puzzle and wondered whether it has a solution or not. Philip has to convince her that his puzzle actually has a solution without revealing it (which would render the challenge pointless). In this situation, Philip needs some kind of \textit{zero-knowledge proof (ZKP)}.

\subsection{Zero-Knowledge Proof}
Introduced by Goldwasser et al. \cite{zkp0}, a ZKP is an interactive protocol between a prover $P$ and a verifier $V$. Both $P$ and $V$ are given a computational problem $x$, but its solution $w$ is known only to $P$. A ZKP enables $P$ to convince $V$ that he/she knows $w$ without leaking any information about $w$. A ZKP with perfect completeness and perfect soundness must satisfy the following three properties.

\begin{enumerate}
	\item \textbf{Perfect Completeness:} If $P$ knows $w$, then $V$ always accepts.
	\item \textbf{Perfect Soundness:} If $P$ does not know $w$, then $V$ always rejects.
	\item \textbf{Zero-knowledge:} $V$ gains no information about $w$, i.e. there is a probabilistic polynomial time algorithm $S$ (called a \textit{simulator}), not knowing $w$ but having access to $V$, such that the outputs of $S$ follow the same probability distribution as the ones of the actual protocol.
\end{enumerate}

As a ZKP exists for every NP problem \cite{zkp}, one can construct a computational ZKP for the ball sort puzzle. However, such a construction requires cryptographic primitives and thus is not practical or intuitive. Instead, many researchers have developed physical ZKPs for logic puzzles using a deck of playing cards. These card-based protocols have benefits that they require only small, portable objects and do not require computers. They also allow external observers to verify that the prover truthfully executes the protocol (which is a challenging task for digital protocols). Moreover, these protocols are suitable for teaching the concept of a ZKP to non-experts.

\subsection{Related Work}
Recently, card-based ZKP protocols for several types of logic puzzles have been developed: Akari \cite{akari}, Bridges \cite{bridges}, Heyawake \cite{nurikabe}, Hitori \cite{nurikabe}, Juosan \cite{takuzu}, Kakuro \cite{akari,kakuro}, KenKen \cite{akari}, Makaro \cite{makaro,makaro2}, Masyu \cite{slitherlink}, Nonogram \cite{nonogram,nonogram2}, Norinori \cite{norinori}, Numberlink \cite{numberlink}, Nurikabe \cite{nurikabe}, Nurimisaki \cite{nurimisaki}, Ripple Effect \cite{ripple}, Shikaku \cite{shikaku}, Slitherlink \cite{slitherlink}, Sudoku \cite{sudoku0,sudoku2,sudoku}, Suguru \cite{suguru}, Takuzu \cite{akari,takuzu}, Usowan \cite{usowan}, and ABC End View \cite{abc}. All of them are pencil puzzles where a solution is a written answer.

\subsection{Our Contribution}
In this paper, we propose a card-based ZKP protocol for the ball sort puzzle with perfect completeness and soundness, which enables the prover $P$ to physically show that he/she knows a solution with $t$ moves of the ball sort puzzle.

By simulating each move by cards, our protocol is the first card-based ZKP protocol for an interactive puzzle (where a solution involves moving objects instead of just a written answer as in pencil puzzles).

\section{Preliminaries}
We will encode an instance of the ball sort puzzle by cards. Our protocol will enable $P$ to simulate each move of a ball such that $V$ can verify that the move is valid but does not know the color of the moved ball, or which bins the ball is moved to and from. In this section, we will introduce subprotocols that are necessary to hide such information.

\subsection{Cards}
Each card used in our protocol has a non-negative integer written on the front side. All cards have indistinguishable back sides denoted by \mybox{?}.

For $1 \leq x \leq q$, define $E_q(x)$ to be a sequence of consecutive $q$ cards, with all of them being \mybox{0}s except the $x$-th card from the left being a \mybox{1}, e.g. $E_4(3)$ is \mbox{\mybox{0}\mybox{0}\mybox{1}\mybox{0}}. This encoding rule was first used by Shinagawa et al. \cite{polygon} in the context of encoding each integer in $\mathbb{Z}/q\mathbb{Z}$ with a regular $q$-gon card.

Furthermore, we define $E_q(0)$ to be a sequence of consecutive $q$ cards, all of them being \mybox{0}s, and $E_q(q+1)$ to be a sequence of consecutive $q$ cards, all of them being \mybox{1}s, e.g. $E_4(0)$ is \mbox{\mybox{0}\mybox{0}\mybox{0}\mybox{0}} and $E_4(5)$ is \mbox{\mybox{1}\mybox{1}\mybox{1}\mybox{1}}.

In some operations, we may stack the cards in $E_q(x)$ into a single stack (with the leftmost card being the topmost card in the stack).

\subsection{Pile-Shifting Shuffle}
Given a $p \times q$ matrix $M$ of cards, a \textit{pile-shifting shuffle} \cite{polygon} shifts the columns of $M$ by a uniformly random cyclic shift unknown to all parties (see Fig. \ref{fig1}). It can be implemented by putting all cards in each column into an envelope, and taking turns to apply \textit{Hindu cuts} (taking several envelopes from the bottom of the pile and putting them on the top) to the pile of envelopes \cite{hindu}.

Note that each card in the matrix can be replaced by a stack of cards, and the protocol still works in the same way as long as every stack in the same row consists of the same number of cards.

\begin{figure}
\centering
\begin{tikzpicture}
\node at (0,0.6) {\mybox{?}};
\node at (0.5,0.6) {\mybox{?}};
\node at (1,0.6) {\mybox{?}};
\node at (1.5,0.6) {\mybox{?}};
\node at (2,0.6) {\mybox{?}};

\node at (0,1.2) {\mybox{?}};
\node at (0.5,1.2) {\mybox{?}};
\node at (1,1.2) {\mybox{?}};
\node at (1.5,1.2) {\mybox{?}};
\node at (2,1.2) {\mybox{?}};

\node at (0,1.8) {\mybox{?}};
\node at (0.5,1.8) {\mybox{?}};
\node at (1,1.8) {\mybox{?}};
\node at (1.5,1.8) {\mybox{?}};
\node at (2,1.8) {\mybox{?}};

\node at (0,2.4) {\mybox{?}};
\node at (0.5,2.4) {\mybox{?}};
\node at (1,2.4) {\mybox{?}};
\node at (1.5,2.4) {\mybox{?}};
\node at (2,2.4) {\mybox{?}};

\node at (-0.4,0.6) {4};
\node at (-0.4,1.2) {3};
\node at (-0.4,1.8) {2};
\node at (-0.4,2.4) {1};

\node at (0,2.9) {1};
\node at (0.5,2.9) {2};
\node at (1,2.9) {3};
\node at (1.5,2.9) {4};
\node at (2,2.9) {5};

\node at (2.9,1.5) {\LARGE{$\Rightarrow$}};
\end{tikzpicture}
\begin{tikzpicture}
\node at (0,0.6) {\mybox{?}};
\node at (0.5,0.6) {\mybox{?}};
\node at (1,0.6) {\mybox{?}};
\node at (1.5,0.6) {\mybox{?}};
\node at (2,0.6) {\mybox{?}};

\node at (0,1.2) {\mybox{?}};
\node at (0.5,1.2) {\mybox{?}};
\node at (1,1.2) {\mybox{?}};
\node at (1.5,1.2) {\mybox{?}};
\node at (2,1.2) {\mybox{?}};

\node at (0,1.8) {\mybox{?}};
\node at (0.5,1.8) {\mybox{?}};
\node at (1,1.8) {\mybox{?}};
\node at (1.5,1.8) {\mybox{?}};
\node at (2,1.8) {\mybox{?}};

\node at (0,2.4) {\mybox{?}};
\node at (0.5,2.4) {\mybox{?}};
\node at (1,2.4) {\mybox{?}};
\node at (1.5,2.4) {\mybox{?}};
\node at (2,2.4) {\mybox{?}};

\node at (-0.4,0.6) {4};
\node at (-0.4,1.2) {3};
\node at (-0.4,1.8) {2};
\node at (-0.4,2.4) {1};

\node at (0,2.9) {4};
\node at (0.5,2.9) {5};
\node at (1,2.9) {1};
\node at (1.5,2.9) {2};
\node at (2,2.9) {3};
\end{tikzpicture}
\caption{An example of a pile-shifting shuffle on a $4 \times 5$ matrix}
\label{fig1}
\end{figure}

\subsection{Pile-Scramble Shuffle}
Given a $p \times q$ matrix $M$ of cards, a \textit{pile-scramble shuffle} \cite{scramble} rearranges the columns of $M$ by a uniformly random permutation unknown to all parties (see Fig. \ref{fig2}). It can be implemented by putting all cards in each column into an envelope, and scrambling all envelopes together completely randomly.

Like the pile-shifting shuffle, each card in the matrix can be replaced by a stack of cards, and the protocol still works in the same way as long as every stack in the same row consists of the same number of cards.

\begin{figure}
\centering
\begin{tikzpicture}
\node at (0,0.6) {\mybox{?}};
\node at (0.5,0.6) {\mybox{?}};
\node at (1,0.6) {\mybox{?}};
\node at (1.5,0.6) {\mybox{?}};
\node at (2,0.6) {\mybox{?}};

\node at (0,1.2) {\mybox{?}};
\node at (0.5,1.2) {\mybox{?}};
\node at (1,1.2) {\mybox{?}};
\node at (1.5,1.2) {\mybox{?}};
\node at (2,1.2) {\mybox{?}};

\node at (0,1.8) {\mybox{?}};
\node at (0.5,1.8) {\mybox{?}};
\node at (1,1.8) {\mybox{?}};
\node at (1.5,1.8) {\mybox{?}};
\node at (2,1.8) {\mybox{?}};

\node at (0,2.4) {\mybox{?}};
\node at (0.5,2.4) {\mybox{?}};
\node at (1,2.4) {\mybox{?}};
\node at (1.5,2.4) {\mybox{?}};
\node at (2,2.4) {\mybox{?}};

\node at (-0.4,0.6) {4};
\node at (-0.4,1.2) {3};
\node at (-0.4,1.8) {2};
\node at (-0.4,2.4) {1};

\node at (0,2.9) {1};
\node at (0.5,2.9) {2};
\node at (1,2.9) {3};
\node at (1.5,2.9) {4};
\node at (2,2.9) {5};

\node at (2.9,1.5) {\LARGE{$\Rightarrow$}};
\end{tikzpicture}
\begin{tikzpicture}
\node at (0,0.6) {\mybox{?}};
\node at (0.5,0.6) {\mybox{?}};
\node at (1,0.6) {\mybox{?}};
\node at (1.5,0.6) {\mybox{?}};
\node at (2,0.6) {\mybox{?}};

\node at (0,1.2) {\mybox{?}};
\node at (0.5,1.2) {\mybox{?}};
\node at (1,1.2) {\mybox{?}};
\node at (1.5,1.2) {\mybox{?}};
\node at (2,1.2) {\mybox{?}};

\node at (0,1.8) {\mybox{?}};
\node at (0.5,1.8) {\mybox{?}};
\node at (1,1.8) {\mybox{?}};
\node at (1.5,1.8) {\mybox{?}};
\node at (2,1.8) {\mybox{?}};

\node at (0,2.4) {\mybox{?}};
\node at (0.5,2.4) {\mybox{?}};
\node at (1,2.4) {\mybox{?}};
\node at (1.5,2.4) {\mybox{?}};
\node at (2,2.4) {\mybox{?}};

\node at (-0.4,0.6) {4};
\node at (-0.4,1.2) {3};
\node at (-0.4,1.8) {2};
\node at (-0.4,2.4) {1};

\node at (0,2.9) {3};
\node at (0.5,2.9) {1};
\node at (1,2.9) {5};
\node at (1.5,2.9) {4};
\node at (2,2.9) {2};
\end{tikzpicture}
\caption{An example of a pile-scramble shuffle on a $4 \times 5$ matrix}
\label{fig2}
\end{figure}

\subsection{Chosen Pile Cut Protocol} \label{chosen}
Given a sequence of $q$ face-down stacks $A = (a_1,a_2,...,a_q)$, with each stack having the same number of cards, a \textit{chosen pile cut protocol} \cite{koch} enables $P$ to select a stack $a_i$ he/she wants without revealing $i$ to $V$. This protocol also reverts the sequence $A$ back to its original state after $P$ finishes using $a_i$ in other protocols.

\begin{figure}
\centering
\begin{tikzpicture}
\node at (0.0,2.4) {$\mystack{?}$};
\node at (0.8,2.4) {$\mystack{?}$};
\node at (1.6,2.4) {...};
\node at (2.4,2.4) {$\mystack{?}$};
\node at (3.2,2.4) {$\mystack{?}$};
\node at (4.0,2.4) {$\mystack{?}$};
\node at (4.8,2.4) {...};
\node at (5.6,2.4) {$\mystack{?}$};

\node at (0.0,2) {$a_1$};
\node at (0.8,2) {$a_2$};
\node at (2.4,2) {$a_{i-1}$};
\node at (3.2,2) {$a_i$};
\node at (4.0,2) {$a_{i+1}$};
\node at (5.6,2) {$a_q$};

\node at (0.0,1.4) {\mybox{?}};
\node at (0.8,1.4) {\mybox{?}};
\node at (1.6,1.4) {...};
\node at (2.4,1.4) {\mybox{?}};
\node at (3.2,1.4) {\mybox{?}};
\node at (4.0,1.4) {\mybox{?}};
\node at (4.8,1.4) {...};
\node at (5.6,1.4) {\mybox{?}};

\node at (0.0,1) {0};
\node at (0.8,1) {0};
\node at (2.4,1) {0};
\node at (3.2,1) {1};
\node at (4.0,1) {0};
\node at (5.6,1) {0};

\node at (0.0,0.4) {\mybox{?}};
\node at (0.8,0.4) {\mybox{?}};
\node at (1.6,0.4) {...};
\node at (2.4,0.4) {\mybox{?}};
\node at (3.2,0.4) {\mybox{?}};
\node at (4.0,0.4) {\mybox{?}};
\node at (4.8,0.4) {...};
\node at (5.6,0.4) {\mybox{?}};

\node at (0.0,0) {1};
\node at (0.8,0) {0};
\node at (2.4,0) {0};
\node at (3.2,0) {0};
\node at (4.0,0) {0};
\node at (5.6,0) {0};
\end{tikzpicture}
\caption{A $3 \times q$ matrix $M$ constructed in Step 1 of the chosen pile cut protocol}
\label{fig3}
\end{figure}

\begin{enumerate}
	\item Construct the following $3 \times q$ matrix $M$ (see Fig. \ref{fig3}).
	\begin{enumerate}
		\item In Row 1, publicly place the sequence $A$.
		\item In Row 2, secretly place a face-down sequence $E_q(i)$.
		\item In Row 3, publicly place a face-down sequence $E_q(1)$.
	\end{enumerate}
	\item Apply the pile-shifting shuffle to $M$.
	\item Turn over all cards in Row 2. Locate the position of the only \mybox{1}. A stack in Row 1 directly above this \mybox{1} will be the stack $a_i$ as desired.
	\item After finishing using $a_i$ in other protocols, place $a_i$ back into $M$ at the same position.
	\item Turn over all face-up cards. Apply the pile-shifting shuffle to $M$.
	\item Turn over all cards in Row 3. Locate the position of the only \mybox{1}. Shift the columns of $M$ cyclically such that this \mybox{1} moves to Column 1. This reverts $M$ back to its original state.
\end{enumerate}

The next two subprotocols are our newly developed subprotocols that will be used in our main protocol.

\subsection{Chosen $k$-Pile Cut Protocol} \label{chosen2}
A \textit{chosen $k$-pile cut protocol} is a generalization of the chosen pile cut protocol. Instead of selecting one stack, $P$ wants to select $k \leq q$ stacks $a_{\gamma_1},a_{\gamma_2},...,a_{\gamma_k}$ (the order of selection matters). This protocol enables $P$ to do so without revealing $\gamma_1,\gamma_2,...,\gamma_k$ to $V$.

\begin{figure}
\centering
\begin{tikzpicture}
\node at (0.0,2.4) {$\mystack{?}$};
\node at (0.8,2.4) {$\mystack{?}$};
\node at (1.6,2.4) {...};
\node at (2.4,2.4) {$\mystack{?}$};
\node at (3.2,2.4) {$\mystack{?}$};
\node at (4.0,2.4) {$\mystack{?}$};
\node at (4.8,2.4) {...};
\node at (5.6,2.4) {$\mystack{?}$};
\node at (6.4,2.4) {$\mystack{?}$};
\node at (7.2,2.4) {$\mystack{?}$};
\node at (8.0,2.4) {...};
\node at (8.8,2.4) {$\mystack{?}$};

\node at (0.0,2) {$a_1$};
\node at (0.8,2) {$a_2$};
\node at (2.4,2) {$a_{\gamma_1-1}$};
\node at (3.2,2) {$a_{\gamma_1}$};
\node at (4.0,2) {$a_{\gamma_1+1}$};
\node at (5.6,2) {$a_{\gamma_2-1}$};
\node at (6.4,2) {$a_{\gamma_2}$};
\node at (7.2,2) {$a_{\gamma_2+1}$};
\node at (8.8,2) {$a_q$};

\node at (0.0,1.4) {\mybox{?}};
\node at (0.8,1.4) {\mybox{?}};
\node at (1.6,1.4) {...};
\node at (2.4,1.4) {\mybox{?}};
\node at (3.2,1.4) {\mybox{?}};
\node at (4.0,1.4) {\mybox{?}};
\node at (4.8,1.4) {...};
\node at (5.6,1.4) {\mybox{?}};
\node at (6.4,1.4) {\mybox{?}};
\node at (7.2,1.4) {\mybox{?}};
\node at (8.0,1.4) {...};
\node at (8.8,1.4) {\mybox{?}};

\node at (0.0,1) {$0$};
\node at (0.8,1) {$0$};
\node at (2.4,1) {$0$};
\node at (3.2,1) {$1$};
\node at (4.0,1) {$0$};
\node at (5.6,1) {$0$};
\node at (6.4,1) {$2$};
\node at (7.2,1) {$0$};
\node at (8.8,1) {$0$};

\node at (0.0,0.4) {\mybox{?}};
\node at (0.8,0.4) {\mybox{?}};
\node at (1.6,0.4) {...};
\node at (2.4,0.4) {\mybox{?}};
\node at (3.2,0.4) {\mybox{?}};
\node at (4.0,0.4) {\mybox{?}};
\node at (4.8,0.4) {...};
\node at (5.6,0.4) {\mybox{?}};
\node at (6.4,0.4) {\mybox{?}};
\node at (7.2,0.4) {\mybox{?}};
\node at (8.0,0.4) {...};
\node at (8.8,0.4) {\mybox{?}};

\node at (0.0,0) {$1$};
\node at (0.8,0) {$2$};
\node at (2.4,0) {$\gamma_1-1$};
\node at (3.2,0) {$\gamma_1$};
\node at (4.0,0) {$\gamma_1+1$};
\node at (5.6,0) {$\gamma_2-1$};
\node at (6.4,0) {$\gamma_2$};
\node at (7.2,0) {$\gamma_2+1$};
\node at (8.8,0) {$q$};
\end{tikzpicture}
\caption{A $3 \times q$ matrix $M$ constructed in Step 1 of the $k$-chosen pile cut protocol for the case $k=2$}
\label{fig4}
\end{figure}

\begin{enumerate}
	\item Construct the following $3 \times q$ matrix $M$ (see Fig. \ref{fig4}).
	\begin{enumerate}
		\item In Row 1, publicly place the sequence $A$.
		\item In Row 2, secretly place a face-down $\mybox{i}$ at Column $\gamma_i$ for each $i=1,2,...,k$. Then, secretly place face-down $\mybox{0}$s at the rest of the columns.
		\item In Row 3, publicly place a face-down $\mybox{i}$ at Column $i$ for each $i=1,2,...,q$.
	\end{enumerate}
	\item Apply the pile-scramble shuffle to $M$.
	\item Turn over all cards in Row 2. Locate the position of an \mybox{$i$} for each $i=1,2,...,k$. A stack in Row 1 directly above the \mybox{$i$} will be the stack $a_{\gamma_i}$ as desired.
	\item After finishing using the selected stacks in other operations, place them back into $M$ at the same positions.
	\item Turn over all face-up cards. Apply the pile-scramble shuffle to $M$.
	\item Turn over all cards in Row 3. Rearrange the columns of $M$ such that the cards in Row 3 are \mybox{1}, \mybox{2}, ..., \mybox{$q$} in this order from left to right. This reverts $M$ back to its original state.
\end{enumerate}

\subsection{Color Checking Protocol} \label{color}
Given two sequences $E_q(x_1)$ and $E_q(x_2)$ ($0 \leq x_1,x_2 \leq q+1$), a \textit{color checking protocol} verifies that
\begin{enumerate}
	\item $1 \leq x_1 \leq q$, and
	\item either $x_2=x_1$ or $x_2=q+1$
\end{enumerate}
without revealing any other information.

\begin{enumerate}
	\item Construct the following $3 \times q$ matrix $M$.
	\begin{enumerate}
		\item In Row 1, publicly place the sequence $E_q(x_1)$.
		\item In Row 2, publicly place the sequence $E_q(x_2)$.
		\item In Row 3, publicly place a face-down sequence $E_q(1)$.
	\end{enumerate}
	\item Apply the pile-shifting shuffle to $M$.
	\item Turn over all cards in Row 1 to show that there are exactly $q-1$ $\mybox{0}$s and one $\mybox{1}$ (otherwise $V$ rejects). Suppose the only \mybox{1} is located at Column $j$.
	\item Turn over the card at Row 2 Column $j$ to show that it is a \mybox{1} (otherwise $V$ rejects).
	\item Turn over all face-up cards. Apply the pile-shifting shuffle to $M$.
	\item Turn over all cards in Row 3. Locate the position of the only \mybox{1}. Shift the columns of $M$ cyclically such that this \mybox{1} moves to Column 1. This reverts $M$ back to its original state.
\end{enumerate}

\section{Main Protocol}
We denote each color of the balls by number $1,2,...,n$. For each empty space in a bin, we put a ``dummy ball'' with number 0 in it, so every bin becomes full. Moreover, we put a dummy ball with number $n+1$ under each bin, and a dummy ball with number 0 right above each bin. Finally, we create an $(h+2) \times (n+m)$ matrix $M$ of stacks of cards, with a stack $E_n(i)$ representing a ball with number $i$. See Fig. \ref{fig5} for an example.

\begin{figure}
\centering
\begin{tikzpicture}
\draw[line width=0.6mm] (0,0) -- (0,1.9);
\draw[line width=0.6mm] (0,0) -- (0.7,0);
\draw[line width=0.6mm] (0.7,0) -- (0.7,1.9);
\draw[line width=0.6mm] (1,0) -- (1,1.9);
\draw[line width=0.6mm] (1,0) -- (1.7,0);
\draw[line width=0.6mm] (1.7,0) -- (1.7,1.9);
\draw[line width=0.6mm] (2,0) -- (2,1.9);
\draw[line width=0.6mm] (2,0) -- (2.7,0);
\draw[line width=0.6mm] (2.7,0) -- (2.7,1.9);
\draw[line width=0.6mm] (3,0) -- (3,1.9);
\draw[line width=0.6mm] (3,0) -- (3.7,0);
\draw[line width=0.6mm] (3.7,0) -- (3.7,1.9);

\node[draw,circle] at (0.35,0.35) {3};
\node[draw,circle] at (0.35,0.95) {1};
\node[draw,circle] at (0.35,1.55) {2};
\node[draw,circle] at (1.35,0.35) {2};
\node[draw,circle] at (1.35,0.95) {2};
\node[draw,circle] at (1.35,1.55) {3};
\node[draw,circle] at (2.35,0.35) {1};
\node[draw,circle] at (2.35,0.95) {1};
\node[draw,circle] at (3.35,0.35) {3};
\node at (0.0,-0.8) {};
\node at (4.3,0.95) {\LARGE{$\rightarrow$}};
\end{tikzpicture}
\begin{tikzpicture}
\draw[line width=0.6mm] (0,0) -- (0,1.9);
\draw[line width=0.6mm] (0,0) -- (0.7,0);
\draw[line width=0.6mm] (0.7,0) -- (0.7,1.9);
\draw[line width=0.6mm] (1,0) -- (1,1.9);
\draw[line width=0.6mm] (1,0) -- (1.7,0);
\draw[line width=0.6mm] (1.7,0) -- (1.7,1.9);
\draw[line width=0.6mm] (2,0) -- (2,1.9);
\draw[line width=0.6mm] (2,0) -- (2.7,0);
\draw[line width=0.6mm] (2.7,0) -- (2.7,1.9);
\draw[line width=0.6mm] (3,0) -- (3,1.9);
\draw[line width=0.6mm] (3,0) -- (3.7,0);
\draw[line width=0.6mm] (3.7,0) -- (3.7,1.9);

\node[draw,circle] at (0.35,-0.35) {4};
\node[draw,circle] at (0.35,0.35) {3};
\node[draw,circle] at (0.35,0.95) {1};
\node[draw,circle] at (0.35,1.55) {2};
\node[draw,circle] at (0.35,2.15) {0};
\node[draw,circle] at (1.35,-0.35) {4};
\node[draw,circle] at (1.35,0.35) {2};
\node[draw,circle] at (1.35,0.95) {2};
\node[draw,circle] at (1.35,1.55) {3};
\node[draw,circle] at (1.35,2.15) {0};
\node[draw,circle] at (2.35,-0.35) {4};
\node[draw,circle] at (2.35,0.35) {1};
\node[draw,circle] at (2.35,0.95) {1};
\node[draw,circle] at (2.35,1.55) {0};
\node[draw,circle] at (2.35,2.15) {0};
\node[draw,circle] at (3.35,-0.35) {4};
\node[draw,circle] at (3.35,0.35) {3};
\node[draw,circle] at (3.35,0.95) {0};
\node[draw,circle] at (3.35,1.55) {0};
\node[draw,circle] at (3.35,2.15) {0};
\node at (0.0,-0.8) {};
\end{tikzpicture}\\
\begin{tikzpicture}
\node at (0,4.4) {$\mystack{?}$};
\node at (1,4.4) {$\mystack{?}$};
\node at (2,4.4) {$\mystack{?}$};
\node at (3,4.4) {$\mystack{?}$};

\node at (0,4) {$E_3(0)$};
\node at (1,4) {$E_3(0)$};
\node at (2,4) {$E_3(0)$};
\node at (3,4) {$E_3(0)$};

\node at (0,3.4) {$\mystack{?}$};
\node at (1,3.4) {$\mystack{?}$};
\node at (2,3.4) {$\mystack{?}$};
\node at (3,3.4) {$\mystack{?}$};

\node at (0,3) {$E_3(2)$};
\node at (1,3) {$E_3(3)$};
\node at (2,3) {$E_3(0)$};
\node at (3,3) {$E_3(0)$};

\node at (0,2.4) {$\mystack{?}$};
\node at (1,2.4) {$\mystack{?}$};
\node at (2,2.4) {$\mystack{?}$};
\node at (3,2.4) {$\mystack{?}$};

\node at (0,2) {$E_3(1)$};
\node at (1,2) {$E_3(2)$};
\node at (2,2) {$E_3(1)$};
\node at (3,2) {$E_3(0)$};

\node at (0,1.4) {$\mystack{?}$};
\node at (1,1.4) {$\mystack{?}$};
\node at (2,1.4) {$\mystack{?}$};
\node at (3,1.4) {$\mystack{?}$};

\node at (0,1) {$E_3(3)$};
\node at (1,1) {$E_3(2)$};
\node at (2,1) {$E_3(1)$};
\node at (3,1) {$E_3(3)$};

\node at (0,0.4) {$\mystack{?}$};
\node at (1,0.4) {$\mystack{?}$};
\node at (2,0.4) {$\mystack{?}$};
\node at (3,0.4) {$\mystack{?}$};

\node at (0,0) {$E_3(4)$};
\node at (1,0) {$E_3(4)$};
\node at (2,0) {$E_3(4)$};
\node at (3,0) {$E_3(4)$};
\end{tikzpicture}
\caption{The way we put dummy balls into an instance of a puzzle, and create a matrix of stacks of cards}
\label{fig5}
\end{figure}

The key observation is that, when moving a ball represented by $a_x$ from bin (column) $A$ to bin (column) $B$, we can view it as swapping $a_x$ with a dummy ball represented by $b_y$ occupying the corresponding empty space in bin $B$. In each such move, $P$ performs the following steps.

\begin{enumerate}
	\item Put all stacks in each column of $M$ together into a single big stack. (We now have $n+m$ big stacks of cards.) Apply the $k$-chosen pile cut protocol with $k=2$ to select columns $A$ and $B$ from $M$.
	\item Arrange the cards in column $A$ as a sequence of stacks $(a_1,a_2,...,a_{h+2})$ representing the $h+2$ balls, where $a_1=E_n(0)$ is the topmost stack in $A$ and $a_{h+2}=E_n(n+1)$ is the bottommost stack in $A$. Analogously, arrange the cards in $B$ as $(b_1,b_2,...,b_{h+2})$.
	\item Apply the chosen pile cut protocol to select a stack $a_x$ from column $A$ and a stack $b_y$ from column $B$. Note that the chosen pile cut protocol preserves the cyclic order of the sequence, thus we can refer to stacks $a_{x-1}$, $b_{y-1}$, and $b_{y+1}$ (where the indices are considered modulo $n$).
	\item Turn over $a_{x-1}$, $b_{y-1}$, and $b_y$ to reveal that they are all $E_n(0)$s (otherwise $V$ rejects).
	\item Apply the color checking protocol on $a_x$ and $b_{y+1}$.
	\item Swap $a_x$ and $b_y$.
\end{enumerate}

$P$ performs the above steps for each of the $t$ moves in $P$'s solution. After finishing all moves, $P$ applies the pile-scramble shuffle to the columns of $M$ to shuffle all columns into a random permutation. Finally, $P$ turns over all cards in $M$. $V$ verifies that there is one column consisting of $(E_n(0),E_n(i),E_n(i),...,E_n(i),$ $E_n(n+1))$ for each $i=1,2,...,n$, and there are $m$ columns each consisting of $(E_n(0),E_n(0),...,E_n(0),E_n(n+1))$. If the verification passes, $V$ accepts; otherwise, $V$ rejects.

Our protocol uses a total of $\Theta(hn(n+m))$ cards.

\section{Proof of Correctness and Security}
We will prove the perfect completeness, perfect soundness, and zero-knowledge properties of our protocol.

\begin{lemma}[Perfect Completeness] \label{lem1}
If $P$ knows a solution with $t$ moves of the ball sort puzzle, then $V$ always accepts.
\end{lemma}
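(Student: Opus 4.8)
The plan is to track a single structural invariant through the entire execution and then read off acceptance from it. Concretely, I would prove by induction on the number of moves already performed that, at the start of each move, the matrix $M$ faithfully encodes the current puzzle configuration: every bin-column, read from top ($a_1$) to bottom ($a_{h+2}$), has the form (top dummy $E_n(0)$; a block of empty slots $E_n(0)$; a block of colored balls $E_n(c)$ with $c\in\{1,\dots,n\}$; bottom dummy $E_n(n+1)$), so that all empty slots sit strictly above all balls. The base case is immediate from the construction of $M$ in Fig.~\ref{fig5}, where a dummy $0$ is placed in each empty space and above each bin and a dummy $n+1$ is placed under each bin.

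For the inductive step I would fix a legal move, say of the topmost ball of bin $A$ onto bin $B$, and check that $P$, feeding the correct indices into each subprotocol, passes every test. First, the chosen $k$-pile cut with $k=2$ and inputs $\gamma_1=A$, $\gamma_2=B$ correctly extracts columns $A$ and $B$ (Section~\ref{chosen2}); re-splitting each big stack into $h+2$ stacks of $n$ cards recovers the sequences $(a_1,\dots,a_{h+2})$ and $(b_1,\dots,b_{h+2})$. Next, the chosen pile cut (Section~\ref{chosen}) lets $P$ select $a_x$, the topmost ball of $A$, and $b_y$, the lowest empty slot of $B$; since this subprotocol preserves the cyclic order, the neighbours $a_{x-1}$, $b_{y-1}$, $b_{y+1}$ are well defined.

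The heart of the argument is matching the three card tests against the legality of the move. By the invariant, $a_x$ being the topmost ball forces the slot above it to be empty, so $a_{x-1}=E_n(0)$; and $b_y$ being the lowest empty slot forces it and every slot above it to be empty, so $b_y=E_n(0)$ and $b_{y-1}=E_n(0)$ (here $y\ge 2$, because $B$ is not full and every actual slot lies below the top dummy $b_1$, so the landing slot has index at least $2$). Hence the reveal in Step~4 passes. For Step~5, $a_x$ is a genuine ball, so its colour $c$ satisfies $1\le c\le n$, and $b_{y+1}$ is either the current top ball of $B$ — whose colour equals $c$ precisely because the move is legal — or, if $B$ was empty, the bottom dummy $E_n(n+1)$; these are exactly the two cases accepted by the color checking protocol (Section~\ref{color}). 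Swapping $a_x$ and $b_y$ in Step~6 then empties the former top slot of $A$ and fills the landing slot of $B$ with colour $c$, re-establishing the invariant for the updated configuration. Since every subprotocol restores $M$ to this updated state, no cards are lost or displaced.

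Finally, I would invoke the invariant after all $t$ moves: because $P$'s moves solve the puzzle, $M$ encodes the solved configuration, i.e.\ $n$ columns of the form $(E_n(0),E_n(i),\dots,E_n(i),E_n(n+1))$ and $m$ columns of the form $(E_n(0),\dots,E_n(0),E_n(n+1))$. The closing pile-scramble shuffle only permutes the columns, leaving this multiset unchanged, so turning over all cards reveals exactly the configuration $V$ checks for, and $V$ accepts. The step requiring the most care is the edge-case bookkeeping in the Step~4/Step~5 correspondence — in particular confirming $y\ge 2$, that $b_{y+1}$ is always defined (falling back to the bottom dummy when $B$ starts empty), and that the ``empties above balls'' invariant is genuinely preserved by the swap — since everything else follows from the already-established correctness of the three subprotocols.
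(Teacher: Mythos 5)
Your proposal is correct and follows essentially the same route as the paper: establish that the subprotocols behave as specified, argue that each legal move passes the Step~4 reveal and the color checking test, and conclude that the final scrambled reveal shows the solved configuration. Your explicit inductive invariant (``$M$ faithfully encodes the current configuration with empties above balls'') and the edge-case bookkeeping ($y\ge 2$, $b_{y+1}$ defined) are left implicit in the paper's proof, but they formalize exactly the same argument rather than a different one.
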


\begin{proof}
First, we will prove the correctness of our newly developed subprotocols: the chosen $k$-pile cut protocol in Section \ref{chosen2} and the color checking protocol in Section \ref{color}.

Consider the chosen $k$-pile cut protocol. In Step 1(b), for each $i=1,2,...,k$, $P$ places an $\mybox{i}$ in the same column as the stack $a_{\gamma_i}$. After applying the pile-scramble shuffle, they will still be in the same column. Therefore, the stacks selected in Step 3 are $a_{\gamma_1},a_{\gamma_2},...,a_{\gamma_k}$ in this order as desired.

Consider the color checking protocol. In Step 3, $V$ verifies that $E(x_1)$ consists of exactly one $\mybox{1}$ and $q-1$ $\mybox{0}$s, which is an equivalent condition to $1 \leq x_1 \leq q$. In Step 4, $V$ verifies that the $x_1$-th leftmost card of $E(x_2)$ is a \mybox{1}, which is an equivalent condition to $x_2=x_1$ or $x_2=q+1$. Therefore, the protocol is correct.

Now suppose $P$ knows a solution with $t$ moves of the puzzle.

Consider each move of a ball represented by $a_x$ in bin (column) $A$ to an empty position represented by $b_y$ in bin (column) $B$. Since this is a valid move, the moved ball must be a topmost real ball in bin $A$, so $a_{x-1}$ must be an $E_n(0)$ (representing either a dummy ball inside the bin or a dummy ball above the bin). As $b_y$ represents a dummy ball inside bin $B$ (not above the bin), $b_y$ and $b_{y-1}$ must also both be $E_n(0)$s. Hence, Step 4 of the main protocol will pass.

Also, since this is a valid move, $a_x$ must represent a real ball, i.e. be an $E_n(\alpha)$ for some $1 \leq \alpha \leq n$, and bin $B$ must either be empty or has the topmost real ball with the same color as the moved ball. In the former case, we have $b_{y+1}=E_n(n+1)$; in the latter case, we have $b_{y+1}=E_n(\alpha)$. Hence, the color cheking protocol will pass.

After $t$ moves, the balls are sorted by color such that each bin is either empty or full with balls of a single color. Hence, the final step of the main protocol will pass.

Therefore, $V$ always accepts.
\end{proof}

\begin{lemma}[Perfect Soundness] \label{lem2}
If $P$ does not know a solution with $t$ moves of the ball sort puzzle, then $V$ always rejects.
\end{lemma}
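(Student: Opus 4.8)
The plan is to prove the contrapositive: I will show that whenever $V$ accepts, the $t$ swaps that $P$ performed, read as ball moves, form a valid $t$-move solution of the puzzle, so that $P$ must know such a solution. I first record the facts that make the analysis deterministic. The initial matrix $M$ is built publicly from the (public) puzzle instance, so every stack is a well-formed $E_n(\cdot)$ encoding; the only content-changing operations applied to the stacks of $M$ afterwards are pile-shifting/pile-scramble shuffles (which merely permute columns) and swaps of two stacks, so the multiset of stacks — and hence the property that every stack is a well-formed encoding — is preserved throughout, assuming as usual that observers enforce that $P$ neither adds, removes, nor alters cards. Moreover, although the chosen pile cut, chosen $k$-pile cut, and color checking subprotocols hide $P$'s private choices, the cards they turn face up are committed beforehand, so $P$ cannot make a stack reveal a value it does not hold; in particular $V$ rejects if a revealed indicator row is malformed, and the neighbours $a_{x-1},b_{y-1},b_{y+1}$ referenced in a move are the genuine cyclic neighbours of $a_x$ and $b_y$ because the chosen pile cut preserves cyclic order.

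The core of the argument is an invariant maintained before and after every move: each column is a \emph{valid bin}, meaning that (reading from the top stack $a_1$ to the bottom stack $a_{h+2}$) it has the form $E_n(0),\dots,E_n(0)$, then a contiguous block of real balls $E_n(\alpha)$ with $1\le\alpha\le n$, then the single sentinel $E_n(n+1)$ at position $h+2$; note $a_1=E_n(0)$ always, since position $1$ is never touched. The initial matrix clearly satisfies this. Assuming it before a move, I analyse the checks. Since $a_x$ must be real (color checking) while $a_{x-1}=E_n(0)$ (Step 4), in a valid column $a_x$ can only be the topmost real ball of $A$; the case $x=1$ is excluded because then $a_{x-1}$ would be the bottom sentinel $E_n(n+1)\ne E_n(0)$. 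Likewise $b_y=E_n(0)$ and $b_{y-1}=E_n(0)$ force $y$ to index the lowest empty slot of $B$: if any empty slot lay below $b_y$ then $b_{y+1}=E_n(0)$ would fail the color check, and $y=1$ is again excluded by the sentinel, which simultaneously rules out moving into a full bin. Finally the color checking protocol guarantees $b_{y+1}\in\{a_x,E_n(n+1)\}$, i.e. bin $B$ is either empty or has a top ball of the same color as $a_x$. These are exactly the legality conditions of a ball sort move, and swapping $a_x$ with $b_y$ visibly turns both columns back into valid bins (the vacated slot of $A$ becomes $E_n(0)$ atop its remaining balls, and $a_x$ lands atop the balls of $B$), so the invariant is preserved. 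Observe also that no swap ever involves an $E_n(n+1)$ stack, so each column keeps its bottom sentinel.

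Iterating, the $t$ moves carry the honest initial configuration through a sequence of legal ball sort moves. After them $P$ only pile-scrambles the columns (a permutation of bins, irrelevant to solvedness) and reveals everything; $V$'s final check accepts exactly when the resulting configuration has, for each color $i$, one full single-color bin $(E_n(0),E_n(i),\dots,E_n(i),E_n(n+1))$ and $m$ empty bins. Thus acceptance certifies that the reached configuration is a solved puzzle, so $P$'s $t$ moves solve it and $P$ knows a $t$-move solution. Contrapositively, if $P$ knows no such solution, every choice of hidden data makes some check fail, and $V$ always rejects.

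I expect the main obstacle to be the second paragraph: arguing that the handful of revealed cards in each move \emph{exactly} characterises a legal move, neither forbidding a legal move nor admitting an illegal one. The delicate points are the cyclic-index edge cases around positions $1$ and $h+2$ (handled by the $E_n(n+1)$ sentinel), and confirming that "lowest empty slot" together with "top ball is the same color, or bin empty" are enforced jointly by Step 4 and the two cases $b_{y+1}=a_x$ and $b_{y+1}=E_n(n+1)$ of the color checking protocol; the invariant is precisely what lets these local reveals speak about the global bin state.
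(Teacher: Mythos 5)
Your proof takes essentially the same route as the paper's: prove the contrapositive, show that the cards revealed in each move force that move to be a legal ball sort move, and conclude from the final reveal that the reached configuration is solved. Your one genuine addition is the explicit column invariant (a block of $E_n(0)$s, then a contiguous block of real balls, then the sentinel $E_n(n+1)$), which the paper's proof uses only tacitly when it infers ``topmost real ball'' from $a_{x-1}=E_n(0)$ and ``topmost real ball of $B$'' from the color check on $b_{y+1}$; making the invariant explicit tightens the argument but does not change it.
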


\begin{proof}
We will prove the contrapositive of this statement. Suppose that $V$ accepts, meaning that the verification passes for every move, and the final step of the main protocol also passes.

Consider each move of swapping stacks $a_x$ in column $A$ and $b_y$ in column $B$. Since the color checking protocol on $a_x$ and $b_{y+1}$ passes, we have that $a_x$ represents a ball with number $\alpha$ for some $1 \leq \alpha \leq n$, i.e. a real ball (not a dummy ball). Since Step 4 of the main protocol passes, we have that $a_{x-1}$ is $E_n(0)$. That means $a_x$ represents a topmost real ball in bin $A$.

Since Step 4 of the main protocol passes, we also have that $b_y$ and $b_{y-1}$ are both $E_n(0)$s, which means $b_y$ represents a dummy ball inside bin $B$ (not above the bin), so bin $B$ is not full. Since the color checking protocol on $a_x$ and $b_{y+1}$ passes, we have that $b_{y+1}$ represents either a real ball with the same number $\alpha$ or a dummy ball with number $n+1$. In the former case, the topmost real ball in bin $B$ has the same color as the moved ball; in the latter case, bin $B$ is empty. Hence, we can conclude that this move is a valid move in the puzzle.

Since the final step of the main protocol passes, we have that after $t$ valid moves, the balls are sort by color such that each bin is either empty or full with balls of a single color. Therefore, we can conclude that $P$ knows a solution with $t$ moves of the puzzle.
\end{proof}

\begin{lemma}[Zero-Knowledge] \label{lem3}
During the verification, $V$ learns nothing about $P$'s solution.
\end{lemma}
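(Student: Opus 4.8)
The standard approach is to construct a simulator $S$ that, without knowing $P$'s solution, produces a transcript whose distribution is identical to that of the real protocol. I would argue that every card $V$ sees turned face-up during the protocol carries a distribution that is independent of the secret solution, so the simulator can reproduce it by hand. The plan is to go through each point in the protocol where cards are revealed and show the revealed values are either fixed in advance or are uniformly distributed over a solution-independent set.

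First I would handle the three subprotocols. In the chosen pile cut and chosen $k$-pile cut protocols, the only cards turned over are in Row 2 and Row 3, and these are revealed only \emph{after} a pile-shifting or pile-scramble shuffle. Because the shuffle applies a uniformly random cyclic shift (resp. uniformly random permutation) unknown to all parties, the position of the \mybox{1} (resp. the positions of the markers $\mybox{1},\dots,\mybox{k}$) is uniformly distributed over all cyclic shifts (resp. all placements consistent with $k$ distinct markers), independently of the secret indices $i$ or $\gamma_1,\dots,\gamma_k$. Thus $S$ can simulate these reveals by choosing the shift/permutation uniformly at random. For the color checking protocol, Step 3 always shows exactly one \mybox{1} among $q-1$ \mybox{0}s in Row 1, with the \mybox{1} at a uniformly random column $j$ (again by the shuffle), and Step 4 always shows a \mybox{1} at that same column in Row 2 — so the reveal is deterministic given the (uniform) value of $j$, revealing nothing about $x_1,x_2$ beyond the fact that the verification passed.

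Next I would treat the reveals that are internal to the main protocol. In Step 4 of each move, the stacks $a_{x-1}$, $b_{y-1}$, $b_y$ are turned over and are \emph{always} $E_n(0)$ whenever $P$ follows a valid solution; these are constant (solution-independent) reveals, so $S$ simply outputs $E_n(0)$ each time. The color checking subprotocol reveals nothing further, as argued above. Finally, the last step of the main protocol applies a pile-scramble shuffle to the $n+m$ columns before turning everything over; the revealed multiset of columns is always exactly the $n$ sorted color-columns plus the $m$ empty columns, arranged in a uniformly random left-to-right order. Hence $S$ outputs a uniformly random arrangement of this fixed multiset. Since each reveal is thus either a fixed value or a uniform draw from a solution-independent set, the simulator's output distribution matches the real one exactly.

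The main obstacle I anticipate is the bookkeeping of the Step 3 reveals inside the main protocol: after the $k$-chosen pile cut and the subsequent chosen pile cuts, the stacks $a_x$ and $b_y$ are selected while preserving cyclic order, and one must confirm that the indices $x,y$ (and hence which columns $A,B$ and which positions within them) are never leaked. The care here is to verify that no card identifying $A$, $B$, $x$, or $y$ is ever turned face-up outside of the shuffled Row-2/Row-3 markers — so that the only information $V$ extracts is the uniformly-distributed shuffle outcome, which is independent of $P$'s choices. Once this is checked for a single move, the argument composes over all $t$ moves because each move restores $M$ to a well-defined state and the shuffles are mutually independent.
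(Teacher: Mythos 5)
Your proposal is correct and follows essentially the same simulator-based argument as the paper: each reveal is either a fixed, solution-independent value or a uniform draw induced by the pile-shifting/pile-scramble shuffles, so $S$ can reproduce the transcript without the solution. You are in fact slightly more thorough than the paper in explicitly noting that the Step 4 reveals of $a_{x-1}$, $b_{y-1}$, $b_y$ are constant $E_n(0)$s, a point the paper's proof leaves implicit.
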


\begin{proof}
It is sufficient to show that all distributions of cards that are turned face-up can be simulated by a simulator $S$ that does not know $P$'s solution.

\begin{itemize}
	\item In Steps 3 and 6 of the chosen pile cut protocol in Section \ref{chosen}, due to the pile-shifting shuffle, the \mybox{1} has an equal probability to be at any of the $q$ positions. Hence, these steps can be simulated by $S$.
	\item In Steps 3 and 6 of the chosen $k$-pile cut protocol in Section \ref{chosen2}, due to the pile-scramble shuffle, the order of the $q$ cards are uniformly distributed among all $q!$ permutations of them. Hence, these steps can be simulated by $S$.
	\item In Steps 3 and 6 of the color checking protocol in Section \ref{color}, due to the pile-shifting shuffle, the \mybox{1} has an equal probability to be at any of the $q$ positions. Hence, these steps can be simulated by $S$.
	\item In the final step of the main protocol, there is one column of $M$ consisting of $(E_n(0),$ $E_n(i),E_n(i),...,E_n(i),E_n(n+1))$ for each $i=1,2,...,n$, and there are $m$ columns each consisting of $(E_n(0),E_n(0),...,E_n(0),E_n(n+1))$. Due to the pile-scramble shuffle, the order of the $n+m$ columns are uniformly distributed among all $(n+m)!$ permutations of them. Hence, these steps can be simulated by $S$.
\end{itemize}

Therefore, we can conclude that $V$ learns nothing about $P$'s solution.
\end{proof}

\section{Future Work}
We developed a card-based ZKP protocol for the ball sort puzzle. A possible future work is to develop such protocol for water sort puzzle, a similar puzzle with more restrictive rules. (In the water sort puzzle, consecutive ``balls'' with the same color are connected and must be moved together.) We believe it is significantly harder to construct a ZKP for this puzzle as $P$ must also hide the number of moved balls in each move.

\subsubsection*{Acknowledgement}
The author would like to thank Daiki Miyahara for a valuable discussion on this research.


\begin{thebibliography}{99}
	\bibitem{akari} X. Bultel, J. Dreier, J.-G. Dumas and P. Lafourcade. Physical Zero-Knowledge Proofs for Akari, Takuzu, Kakuro and KenKen. In \textit{Proceedings of the 8th International Conference on Fun with Algorithms (FUN)}, pp. 8:1--8:20 (2016).
	\bibitem{makaro} X. Bultel, J. Dreier, J.-G. Dumas, P. Lafourcade, D. Miyahara, T. Mizuki, A. Nagao, T. Sasaki, K. Shinagawa and H. Sone. Physical Zero-Knowledge Proof for Makaro. In \textit{Proceedings of the 20th International Symposium on Stabilization, Safety, and Security of Distributed Systems (SSS)}, pp. 111--125 (2018).
	\bibitem{nonogram} Y.-F. Chien and W.-K. Hon. Cryptographic and Physical Zero-Knowledge Proof: From Sudoku to Nonogram. In \textit{Proceedings of the 5th International Conference on Fun with Algorithms (FUN)}, pp. 102--112 (2010).
	\bibitem{norinori} J.-G. Dumas, P. Lafourcade, D. Miyahara, T. Mizuki, T. Sasaki and H. Sone. Interactive Physical Zero-Knowledge Proof for Norinori. In \textit{Proceedings of the 25th International Computing and Combinatorics Conference (COCOON)}, pp. 166--177 (2019).
	\bibitem{abc} T. Fukusawa and Y. Manabe. Card-Based Zero-Knowledge Proof for the Nearest Neighbor Property: Zero-Knowledge Proof of ABC End View. In \textit{Proceedings of the 12th International Conference on Security, Privacy and Applied Cryptographic Engineering (SPACE)}, pp. 147--161 (2022).
	\bibitem{zkp} O. Goldreich, S. Micali and A. Wigderson. Proofs that yield nothing but their validity and a methodology of cryptographic protocol design. \textit{Journal of the ACM}, 38(3): 691--729 (1991).
	\bibitem{zkp0} S. Goldwasser, S. Micali and C. Rackoff. The knowledge complexity of interactive proof systems. \textit{SIAM Journal on Computing}, 18(1): 186--208 (1989).
	\bibitem{google} Google Play: ball sort puzzle. \url{https://play.google.com/store/search?q=ball%20sort%20puzzle}
	\bibitem{sudoku0} R. Gradwohl, M. Naor, B. Pinkas and G.N. Rothblum. Cryptographic and Physical Zero-Knowledge Proof Systems for Solutions of Sudoku Puzzles. \textit{Theory of Computing Systems}, 44(2): 245--268 (2009).
	\bibitem{scramble} R. Ishikawa, E. Chida and T. Mizuki. Efficient Card-Based Protocols for Generating a Hidden Random Permutation Without Fixed Points. In \textit{Proceedings of the 14th International Conference on Unconventional Computation and Natural Computation (UCNC)}, pp. 215--226 (2015).
	\bibitem{np} T. Ito, J. Kawahara, S. Minato, Y. Otachi, T. Saitoh, A. Suzuki, R. Uehara, T. Uno, K. Yamanaka and R. Yoshinaka. Sorting Balls and Water: Equivalence and Computational Complexity. In \textit{Proceedings of the 11th International Conference on Fun with Algorithms (FUN)}, pp. 16:1--16:17 (2022).
	\bibitem{koch} A. Koch and S. Walzer. Foundations for Actively Secure Card-Based Cryptography. In \textit{Proceedings of the 10th International Conference on Fun with Algorithms (FUN)}, pp. 17:1--17:23 (2020).
	\bibitem{slitherlink} P. Lafourcade, D. Miyahara, T. Mizuki, L. Robert, T. Sasaki and H. Sone. How to construct physical zero-knowledge proofs for puzzles with a ``single loop'' condition. \textit{Theoretical Computer Science}, 888: 41--55 (2021).
	\bibitem{takuzu} D. Miyahara, L. Robert, P. Lafourcade, S. Takeshige, T. Mizuki, K. Shinagawa, A. Nagao and H. Sone. Card-Based ZKP Protocols for Takuzu and Juosan. In \textit{Proceedings of the 10th International Conference on Fun with Algorithms (FUN)}, pp. 20:1--20:21 (2020).
	\bibitem{kakuro} D. Miyahara, T. Sasaki, T. Mizuki and H. Sone. Card-Based Physical Zero-Knowledge Proof for Kakuro. \textit{IEICE Transactions on Fundamentals of Electronics, Communications and Computer Sciences}, E102.A(9): 1072--1078 (2019).
	\bibitem{suguru} L. Robert, D. Miyahara, P. Lafourcade, L. Libralesso and T. Mizuki. Physical zero-knowledge proof and NP-completeness proof of Suguru puzzle. \textit{Information and Computation}, 285(B): 104858 (2022).
	\bibitem{nurikabe} L. Robert, D. Miyahara, P. Lafourcade and T. Mizuki. Card-Based ZKP for Connectivity: Applications to Nurikabe, Hitori, and Heyawake. \textit{New Generation Computing}, 40(1): 149--171 (2022).
	\bibitem{nurimisaki} L. Robert, D. Miyahara, P. Lafourcade and T. Mizuki. Card-Based ZKP Protocol for Nurimisaki. In \textit{Proceedings of the 24th International Symposium on Stabilization, Safety, and Security of Distributed Systems (SSS)}, pp. 285--298 (2022).
	\bibitem{usowan} L. Robert, D. Miyahara, P. Lafourcade and T. Mizuki. Hide a Liar: Card-Based ZKP Protocol for Usowan. In \textit{Proceedings of the 17th Annual Conference on Theory and Applications of Models of Computation (TAMC)}, pp. 201--217 (2022).
	\bibitem{nonogram2} S. Ruangwises. An Improved Physical ZKP for Nonogram. In \textit{Proceedings of the 15th Annual International Conference on Combinatorial Optimization and Applications (COCOA)}, pp. 262--272 (2021).
	\bibitem{sudoku2} S. Ruangwises. Two Standard Decks of Playing Cards are Sufficient for a ZKP for Sudoku. \textit{New Generation Computing}, 40(1): 49--65 (2022).
	\bibitem{shikaku} S. Ruangwises and T. Itoh. How to Physically Verify a Rectangle in a Grid: A Physical ZKP for Shikaku. In \textit{Proceedings of the 11th International Conference on Fun with Algorithms (FUN)}, pp. 24:1--24:12 (2022).
	\bibitem{numberlink} S. Ruangwises and T. Itoh. Physical Zero-Knowledge Proof for Numberlink Puzzle and $k$ Vertex-Disjoint Paths Problem. \textit{New Generation Computing}, 39(1): 3--17 (2021).
	\bibitem{ripple} S. Ruangwises and T. Itoh. Physical Zero-Knowledge Proof for Ripple Effect. \textit{Theoretical Computer Science}, 895: 115--123 (2021).
	\bibitem{bridges} S. Ruangwises and T. Itoh. Physical ZKP for Connected Spanning Subgraph: Applications to Bridges Puzzle and Other Problems. In \textit{Proceedings of the 19th International Conference on Unconventional Computation and Natural Computation (UCNC)}, pp. 149--163 (2021).
	\bibitem{makaro2} S. Ruangwises and T. Itoh. Physical ZKP for Makaro Using a Standard Deck of Cards. In \textit{Proceedings of the 17th Annual Conference on Theory and Applications of Models of Computation (TAMC)}, pp. 43--54 (2022).
	\bibitem{sudoku} T. Sasaki, D. Miyahara, T. Mizuki and H. Sone. Efficient card-based zero-knowledge proof for Sudoku. \textit{Theoretical Computer Science}, 839: 135--142 (2020).
	\bibitem{polygon} K. Shinagawa, T. Mizuki, J.C.N. Schuldt, K. Nuida, N. Kanayama, T. Nishide, G. Hanaoka and E. Okamoto. Card-Based Protocols Using Regular Polygon Cards. \textit{IEICE Transactions on Fundamentals of Electronics, Communications and Computer Sciences}, E100.A(9): 1900--1909 (2017).
	\bibitem{hindu} I. Ueda, D. Miyahara, A. Nishimura, Y. Hayashi, T. Mizuki and H. Sone. Secure implementations of a random bisection cut. \textit{International Journal of Information Security}, 19(4): 445--452 (2020).
\end{thebibliography}
\end{document}